\documentclass[a4paper,11pt]{article}

\usepackage[margin=1in]{geometry}
\usepackage{setspace}
\usepackage{amsmath}
\usepackage{amssymb}
\usepackage{amsthm}
\usepackage{graphicx}
\usepackage{float}
\usepackage{caption}
\usepackage{listings}
\usepackage{tikz}
\usepackage{enumitem}
\usepackage{xcolor}
\usepackage{mathrsfs}

\newtheorem{theorem}{Theorem}[section]

\newtheorem{lemma}[theorem]{Lemma}
\newtheorem{definition}[theorem]{Definition}

\newtheorem{remark}[theorem]{Remark}

\numberwithin{figure}{section}

\providecommand{\keywords}[1]{\textbf{Keywords:} #1}

\usepackage{hyperref}
\hypersetup{
    pdftitle={On Conservative Statistical Riemann Surfaces},
    pdfauthor={Hanwen Liu}
}

\begin{document}

\title{\texorpdfstring{\textbf{On Conservative Statistical Riemann Surfaces}}{On Conservative Statistical Riemann Surfaces}}
\author{Hanwen Liu}
\date{}

\maketitle

\begin{abstract}
We establish a correspondence between information geometry and gauge theory. First, we define an important class of statistical manifolds, that is normalized and satisfies a conservation field equation. Second, we prove that for a conservative statistical structure on an orientable surface, the Chebyshev 1-form is constrained to be harmonic, and the traceless part of the Amari--Chentsov tensor descends to a holomorphic cubic differential. Then, we demonstrate that normalized conservative statistical structures are geometrically generated by solutions to the scalar Tzitz\'eica equation on Higgs bundles with general linear holonomy, generalizing the Labourie-Loftin correspondence. Finally, we prove that the moduli space of normalized conservative statistical structures on a closed orientable surface of genus at least 2 is completely parameterized by a holomorphic vector bundle over the Teichm\"uller space, consisting of Abelian differentials and cubic differentials.
\end{abstract}

\begin{center}
\keywords{Information geometry, Statistical manifold, Higgs bundle, Hitchin equation, Teichm\"uller space, Cubic differential, Tzitz\'eica equation}
\end{center}

\onehalfspacing
\raggedbottom

\section{Introduction and Background}

The interplay between differential geometry and mathematical statistics has culminated in the rich field of information geometry. A foundational concept in this domain is the statistical manifold, characterized by a Riemannian metric and a totally symmetric rank-3 tensor known as the Amari--Chentsov tensor \cite{Amari1985}. While these structures originally emerged from the study of parametric probability distributions, their geometric properties possess profound connections to affine differential geometry \cite{Nomizu1994} and projective structures \cite{Labourie2007}.

A striking manifestation of geometric rigidity arises when specific differential constraints are imposed on the statistical structure. For instance, the condition that a statistical manifold is divergence-free or possesses vanishing statistical curvature often forces the underlying geometry to completely collapse into well-studied symmetric spaces \cite{Opozda2004}. In accordance with the spirit of rigidity, this paper investigates how global analytic and topological structures on Riemann surfaces can be enforced by a partial differential constraint known as the conservative condition. 

We divide our exploration into two distinct but conceptually related results concerning the rigidity of statistical surfaces. 

In Section 2, we explore the local geometric regularity of statistical structures. Without assuming global symmetries, the Amari--Chentsov tensor is not guaranteed to be compatible with the underlying complex structure of the surface. However, we demonstrate that imposing the conservative condition restricts the divergence of the tensor, rigidly forcing the trace components to be harmonic and the remaining $(3,0)$-components to be holomorphic.

In Section 3, we turn our attention to the global realization of these structures. While the local holomorphicity provides algebraic data, elevating this to a global parametrization requires navigating the non-linear Hitchin equations \cite{Hitchin1987}. We demonstrate that by anchoring the statistical structure to a specific normalization condition, we can exploit the uniformization of Higgs bundles \cite{Hitchin1992, Simpson1988}. This gauge-theoretic stabilization explicitly forces the statistical metrics to solve the scalar Tzitz\'eica equation, a rigidity result we ultimately apply to construct a surjective mapping from the space of normalized conservative statistical structures onto a vector bundle over the Teichm\"uller space of the surface.

\section{Preliminaries on Information Geometry}

We begin by investigating the local differential properties of statistical structures. A statistical manifold is equipped with a metric and a totally symmetric tensor. By controlling the divergence of this tensor via the contracted Bianchi identity, we can extract harmonic and holomorphic data.

\begin{definition}
For a Riemannian manifold $(M,g)$ and a totally symmetric $(0,3)$-tensor $C$ on $M$, the pair $(C,g)$ is called a statistical structure on $M$.
\end{definition}

\begin{definition}
A statistical structure $(C,g)$ on a differentiable manifold $M$ is said to be conservative, if it satisfies the contracted Bianchi identity
\begin{equation}\label{field_equation}
\nabla^g(\operatorname{tr}_g(C))=2\operatorname{div}_g(C)
\end{equation}
where $\nabla^g$ is the Levi-Civita connection of $(M,g)$.
\end{definition}

We first show that the conservative condition rigidly enforces harmonicity on the trace of the statistical structure, naturally generalizing the concept of incompressible flows to information geometry.

\begin{lemma}\label{harmonicity}
Let $(C,g)$ be a conservative statistical structure on a differentiable manifold $M$. Then, the Chebyshev 1-form $\operatorname{tr}_g(C)$ is harmonic.
\end{lemma}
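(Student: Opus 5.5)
The plan is to show that the Chebyshev 1-form $\theta:=\operatorname{tr}_g(C)$, i.e.\ $\theta_i=g^{jk}C_{ijk}$, is both closed and co-closed. Then $\Delta\theta=(d\delta+\delta d)\theta=0$. On a possibly non-compact $M$, this pair of conditions is exactly what ``harmonic'' means, so nothing more is needed. Both facts should come from reading the field equation \eqref{field_equation} in index notation and taking (i) its antisymmetric part and (ii) its full trace.

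First I fix the conventions. The left-hand side of \eqref{field_equation} is the covariant derivative of the 1-form $\theta$, a $(0,2)$-tensor $(\nabla^g\theta)_{ij}=\nabla_i\theta_j$. The right-hand side is the divergence of $C$ in one slot, $(\operatorname{div}_g C)_{ij}=\nabla^l C_{lij}$. Because $C$ is totally symmetric, it does not matter which slot is contracted. So \eqref{field_equation} reads
\begin{equation*}
\nabla_i\theta_j = 2\,\nabla^l C_{lij}.
\end{equation*}
Fixing this interpretation carefully is the only point I expect to need any care. Once it is fixed the argument is purely algebraic.

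\emph{Closedness.} The right-hand side is symmetric in $(i,j)$, since $C$ is symmetric and $\nabla^g$ commutes with index permutations. Hence $\nabla_i\theta_j-\nabla_j\theta_i=0$. The Levi-Civita connection is torsion-free, so this antisymmetrization equals $(d\theta)_{ij}$, and therefore $d\theta=0$.

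\emph{Co-closedness.} Contract the equation with $g^{ij}$. Metric compatibility lets $g^{ij}$ pass through $\nabla$. The left side becomes $\nabla^i\theta_i=-\delta\theta$. The right side becomes
\begin{equation*}
2\nabla^l\bigl(g^{ij}C_{lij}\bigr)=2\nabla^l\theta_l .
\end{equation*}
Thus $\nabla^i\theta_i=2\nabla^i\theta_i$, which forces $\nabla^i\theta_i=0$, i.e.\ $\delta\theta=0$.

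Combining the two steps, $\theta$ is closed and co-closed, hence harmonic, which proves Lemma~\ref{harmonicity}. The factor $2$ in \eqref{field_equation} is what makes the traced identity degenerate to $\operatorname{div}\theta=0$. With coefficient $1$ the trace would carry no information, so this is where the specific normalization of the conservative condition is used.
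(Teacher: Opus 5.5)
Your proof is correct and follows the same argument as the paper: tracing the field equation gives $\operatorname{div}_g(\tau)=2\operatorname{div}_g(\tau)$, and the symmetry of $\operatorname{div}_g(C)$ forces $\nabla^g\tau$ to be symmetric, hence $d\tau=0$. The paper first writes out the trace/traceless decomposition of $C$ and its divergence, but that step is not used for this lemma, so your more direct version loses nothing.
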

\begin{proof}
Let $\tau := \operatorname{tr}_g(C)$ and let $C_0$ be the traceless part of $C$. Let $n$ be the dimension of $M$.
The tensor $C$ decomposes as
\begin{equation}\label{decomposition_n}
C(u,v,w) = C_0(u,v,w) + \frac{1}{n+2} ( \tau(u)g(v,w) + \tau(v)g(w,u) + \tau(w)g(u,v) ).
\end{equation}
Taking the divergence of equation~(\ref{decomposition_n}) yields
$$\operatorname{div}_g(C)(u,v) = \operatorname{div}_g(C_0)(u,v) + \frac{1}{n+2} (\operatorname{div}_g(\tau)g(u,v) + (\nabla^g_u \tau)(v) + (\nabla^g_v \tau)(u)).$$
Evaluating the trace of the field equation $\nabla^g \tau = 2\operatorname{div}_g(C)$ gives
$$\operatorname{div}_g(\tau) = 2\operatorname{tr}_g(\operatorname{div}_g(C)) = 2\operatorname{div}_g(\tau),$$
which yields $\operatorname{div}_g(\tau) = 0$.
Furthermore, since the divergence $\operatorname{div}_g(C)$ is symmetric in its remaining two indices, the field equation $\nabla^g \tau = 2\operatorname{div}_g(C)$ requires that $\nabla^g \tau$ is symmetric.
The symmetry of $\nabla^g \tau$ is equivalent to the condition $d\tau = 0$.
Because $\tau$ is both closed and divergence-free, it is harmonic.
\end{proof}

To subsequently connect these structures to global moduli spaces, we must impose a scalar constraint linking the norm of the tensor to the underlying curvature of the manifold. This ensures that the induced geometry is strictly hyperbolic in nature.

\begin{definition}
A statistical structure $(C,g)$ on a differentiable manifold $M$ is said to be normalized if 
$$
|C_0|^2=4S_g+16,
$$
where $C_0$ is the traceless part of $C$, and $S_g$ is the Ricci scalar of $(M,g)$.
\end{definition}

We now demonstrate that in complex dimension one, the traceless part of a conservative statistical structure naturally descends to a holomorphic cubic differential. This algebraic extraction is the fundamental bridge between statistical structures and complex analysis.

\begin{lemma}\label{holomorphicity}
Let $(C,g)$ be a conservative statistical structure on an orientable surface $M$, and let $J$ be the complex structure induced by the conformal class of $g$, so that $X=(M,J)$ is a Riemann surface.
Then the $(3,0)$-component of $C$ is a holomorphic cubic differential on $X$.
\end{lemma}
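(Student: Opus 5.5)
The plan is to work in local isothermal coordinates and reduce the field equation (\ref{field_equation}) to a single $\bar\partial$-equation for the $(3,0)$-part of $C$. Around any point of $M$, choose a local conformal coordinate $z=x+iy$ for $J$, so that $g=\lambda\,|dz|^2$ with $\lambda>0$. Extend $C$ complex-multilinearly and split it by type:
$$C=Q\,dz^3+\overline{Q}\,d\bar z^3+(\text{terms of type }(2,1)+(1,2)),\qquad Q:=C(\partial_z,\partial_z,\partial_z).$$
The first step is to observe that, in dimension $n=2$, the decomposition (\ref{decomposition_n}) separates these types. The trace term $\tfrac14(\tau\odot g)$ involves $g$, which is of pure type $(1,1)$. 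Hence it contributes only to the mixed components $C_{zz\bar z}$ and $C_{z\bar z\bar z}$.

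Conversely, $C_0$ is traceless, and $g^{-1}$ only pairs $\partial_z$ with $\partial_{\bar z}$. So $0=\operatorname{tr}_g C_0$ reads $C_{0,z\bar z\,\cdot}=0$, meaning the mixed components of $C_0$ vanish. Therefore $C_0=Q\,dz^3+\overline{Q}\,d\bar z^3$, and the $(3,0)$-component of $C$ is exactly the $(3,0)$-component of $C_0$, namely $Q\,dz^3$. Since the splitting of $C_0$ is intrinsic to the conformal structure, $Q\,dz^3$ is a globally defined section of $K_X^3$. It remains to show $\partial_{\bar z}Q=0$.

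Next I take the divergence identity from the proof of Lemma~\ref{harmonicity} with $n=2$:
$$\operatorname{div}_g(C)(u,v)=\operatorname{div}_g(C_0)(u,v)+\tfrac14\big(\operatorname{div}_g(\tau)g(u,v)+(\nabla^g_u\tau)(v)+(\nabla^g_v\tau)(u)\big).$$
I evaluate this at $u=v=\partial_z$. Here $g(\partial_z,\partial_z)=0$, and Lemma~\ref{harmonicity} gives $d\tau=0$, so $\nabla^g\tau$ is symmetric. The identity becomes $\operatorname{div}_g(C)_{zz}=\operatorname{div}_g(C_0)_{zz}+\tfrac12(\nabla^g\tau)_{zz}$. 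Inserting this into the $(z,z)$-component of (\ref{field_equation}) gives
$$(\nabla^g\tau)_{zz}=2\operatorname{div}_g(C_0)_{zz}+(\nabla^g\tau)_{zz},$$
hence $\operatorname{div}_g(C_0)_{zz}=0$.

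The remaining step, which I expect to be the only real computation, is to identify $\operatorname{div}_g(C_0)_{zz}$ with a multiple of $\partial_{\bar z}Q$. Since $g^{zz}=g^{\bar z\bar z}=0$ and $g^{z\bar z}=2/\lambda$,
$$\operatorname{div}_g(C_0)_{zz}=\tfrac{2}{\lambda}\big((\nabla_{\bar z}C_0)_{zzz}+(\nabla_{z}C_0)_{\bar z zz}\big).$$
For $g=\lambda|dz|^2$ the only nonzero Christoffel symbols are $\Gamma^z_{zz}=\partial_z\log\lambda$ and its conjugate; in particular $\Gamma^z_{z\bar z}=\Gamma^{\bar z}_{z\bar z}=0$ and $\Gamma^{\bar z}_{zz}=0$. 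Expanding with these symbols, $(\nabla_{\bar z}C_0)_{zzz}=\partial_{\bar z}Q$ because all correction terms carry a vanishing Christoffel symbol. Likewise $(\nabla_z C_0)_{\bar z zz}=\partial_z C_{0,\bar z zz}-\Gamma^\cdot_{z\cdot}(\ldots)$. Every term here is either a mixed component of $C_0$, which vanishes, or multiplied by a vanishing Christoffel symbol.

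Hence $\tfrac{2}{\lambda}\partial_{\bar z}Q=0$, so $Q$ is holomorphic in every conformal chart. Together with the global well-definedness of $Q\,dz^3$ from the first step, this proves that the $(3,0)$-component of $C$ is a holomorphic cubic differential on $X$. The care needed is only in this Christoffel bookkeeping. Conceptually, it says that the $(0,1)$-part of the Levi-Civita connection on $K_X^3$ is $\bar\partial$, which is how I would phrase it to avoid index errors.
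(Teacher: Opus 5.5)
Your proposal is correct and follows the paper's route: split off the trace part of $C$, take the divergence, and use the field equation to cancel the trace contributions, leaving $\operatorname{div}_g(C_0)=0$, which is then identified with $\bar\partial Q=0$. Your version is a bit tighter in two ways. You actually carry out the Christoffel computation $\operatorname{div}_g(C_0)_{zz}=\tfrac{2}{\lambda}\partial_{\bar z}Q$, which the paper only asserts. And because you work only with the $(z,z)$-component, where $g(\partial_z,\partial_z)=0$ and $(\nabla_u\tau)(v)+(\nabla_v\tau)(u)$ is automatically symmetric on the diagonal, you do not in fact need Lemma~\ref{harmonicity} at all.
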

\begin{proof}
Let $\tau := \operatorname{tr}_g(C)$ and let $C_0$ be the traceless part of $C$.
As $M$ is a smooth surface, its dimension is $n=2$, and hence the tensor $C$ decomposes as
\begin{equation}\label{decomposition_2}
C(u,v,w) = C_0(u,v,w) + \frac{1}{4} (\tau (u)g(v,w) + \tau (v)g(w,u)+\tau (w)g(u,v)).
\end{equation}
Taking the divergence of equation~(\ref{decomposition_2}) yields
$$\operatorname{div}_g (C)(u,v) = \operatorname{div}_g (C_0)(u,v) + \frac{1}{4}(\operatorname{div}_g (\tau)g(u,v) + (\nabla^g_u \tau)(v)+(\nabla^g_v \tau)(u)).$$
By Lemma \ref{harmonicity}, the Chebyshev $1$-form $\tau$ is harmonic, which yields $\operatorname{div}_g (\tau) = 0$ and $d\tau = 0$.
The condition $d\tau = 0$ implies that $\nabla^g \tau$ is symmetric, simplifying the divergence identity to
$$\operatorname{div}_g (C) = \operatorname{div}_g (C_0) + \frac{1}{2}\nabla^g \tau.$$
Substituting the field equation $\nabla^g \tau = 2\operatorname{div}_g (C)$ eliminates the trace components entirely, leaving $\operatorname{div}_g (C_0) = 0$.
The incompressibility of the totally symmetric, traceless tensor $C_0$ on the orientable surface $M$ is equivalent to the Cauchy-Riemann equation $\bar{\partial} Q = 0$, where $Q$ is the $(3,0)$-component of $C$.
Therefore, $Q$ is a holomorphic cubic differential on $X$.
\end{proof}

\section{The Higgs Bundle Formulation}

Having established the local holomorphicity of the tensor components, we transition to the global formulation using Higgs bundles. Hitchin's equations provide a powerful gauge-theoretic framework to study holomorphic differentials over Riemann surfaces. By utilizing a specific cyclic grading of the Higgs bundle, we can synthesize the metric and the tensor into a single flat connection.

\begin{definition}
Let $(C,g)$ be a conservative statistical structure on a closed orientable surface $M$, and let $J$ be the complex structure induced by the conformal class of $g$, so that $X=(M,J)$ is a compact Riemann surface.
The moduli data of $(C,g)$ is defined to be the pair $$(\omega,Q)\in H^0(X;K_X)\oplus H^0(X;3K_X),$$ where $Q$ is the $(3,0)$-component of $C$, and $\operatorname{tr}_g(C)=16\operatorname{Re}(\omega)$.
\end{definition}

We now prove that any pair of moduli data uniquely determines a normalized conservative statistical structure through the unique harmonic metric of a cyclic Higgs bundle.

\begin{lemma}\label{core_lemma1}
Let $X$ be a compact Riemann surface of genus at least $2$, and $\mathcal{E}:=K_X^{-1}\oplus \mathcal{O}_X\oplus K_X$.
Let $\pi\colon\mathcal{E}\rightarrow K_X^{-1}$ be the canonical projection, and $\xi:=(0,1,0)^T\in H^0(X;\mathcal{E})$.
Then, for any
$(\omega,Q) \in H^0(X; K_X)\oplus H^0(X; 3K_X)$, there exists a unique Hermitian metric $h$ on $X$ such that $H=\operatorname{diag}(h,1,h^{-1})$ solves the Hitchin equation 
\begin{equation}\label{Hitchin}
F_H+[\Phi,\Phi^*]=0
\end{equation}
on the stable Higgs bundle $(\mathcal{E},\Phi)$, where
$$
\Phi:=\begin{pmatrix}
\omega & 1 & 0\\
0 & \omega & 1\\
Q & 0 & \omega
\end{pmatrix}\in H^0(X;\operatorname{End}(\mathcal{E})\otimes K_X),
$$
and $F_H$ is the curvature of the Chern connection $\nabla^H$ of $(\mathcal{E},H)$.
Moreover, the flat connection $D:=\nabla^H+\Phi+\Phi^*$ induces a normalized conservative statistical structure $(-\nabla g,g)$ on the underlying differentiable manifold $M$ of $X$ via $g:=\operatorname{Re}(h)$ and $$\nabla_vw:=\pi(D_vD_w\xi)+2g(v,w)\operatorname{Re}(\omega)^\sharp,$$ where $v,w$ are smooth vector fields on $M$, and $\sharp$ is the musical isomorphism induced by $g$.
\end{lemma}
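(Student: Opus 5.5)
Split the statement into two parts: existence and uniqueness of the harmonic metric, which is gauge theory, and the verification that $(-\nabla g,g)$ is a normalized conservative statistical structure, which is a local computation.

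\textbf{Step 1: reduce to the trace-free Higgs field.} Write $\Phi=\omega\,\mathrm{Id}+\Phi_0$, where
$$\Phi_0=\begin{pmatrix}0&1&0\\0&0&1\\Q&0&0\end{pmatrix}.$$
Since $\omega\,\mathrm{Id}$ is central, $[\Phi,\Phi^*]=[\Phi_0,\Phi_0^*]$. Hence the Hitchin equation (\ref{Hitchin}) for $\Phi$ coincides with the one for the $\mathrm{SL}(3)$ cyclic Higgs bundle $(\mathcal{E},\Phi_0)$ of the Labourie--Loftin type.

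\textbf{Step 2: existence and uniqueness of $h$.} Stability of $(\mathcal{E},\Phi_0)$ holds for genus $\ge 2$ by the standard argument. The $\Phi_0$-invariant subbundles must contain $K_X$, and $\deg K_X>0$ is offset by the $1$ entries mapping $K_X\to\mathcal{O}_X\to K_X^{-1}$. The Hitchin--Simpson theorem then gives a unique harmonic metric of determinant one.

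That metric is diagonal. The order-three automorphism $\mathrm{diag}(\zeta,1,\zeta^{-1})$ conjugates $\Phi_0$ to $\zeta\Phi_0$, and the $\mathrm{Id}$-part together with the real structure (the $\mathcal{E}\cong\mathcal{E}^*$ pairing) also preserves the equation. Uniqueness therefore forces the metric to be invariant under both symmetries, so it has the form $\mathrm{diag}(h,1,h^{-1})$.

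Concretely, (\ref{Hitchin}) reduces to Wang's/Tzitz\'eica's scalar equation
$$\Delta_h\log h = 2h - 2|Q|^2h^{-2},$$
up to normalizing constants. Uniqueness of $h$ among diagonal metrics of this shape follows from the maximum principle, since the right side is monotone in $\log h$.

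\textbf{Step 3: the connection and the tensor.} Flatness of $D$ is the Hitchin equation, because $\bar\partial\Phi=0$ and the $(1,1)$-part of $D^2$ is exactly (\ref{Hitchin}).

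Identify $TM\cong K_X^{-1}$ via $\pi$. The section $\xi$ satisfies $D\xi=\omega\xi+(1\text{-entry})+h$-adjoint terms. Its derivative $D_w\xi$ has a $K_X^{-1}$ component $w$ itself, a component $\operatorname{Re}(\omega)(w)\xi$, and a $K_X$ component $w^\flat$. Applying $D_v$ once more and projecting with $\pi$ gives
$$\pi(D_vD_w\xi)=\nabla^{g}_vw+\text{terms from }\Phi\text{ acting on }K_X\to K_X^{-1}.$$
The $Q$ entry produces $g^{-1}\operatorname{Re}(Q)(v,w,\cdot)$, the $\omega$ terms produce multiples of $\operatorname{Re}(\omega)$ contracted with $v,w$, and the $\Phi^*$ term produces $-g(v,w)\operatorname{Re}\omega^\sharp$-type contributions, which the added $2g(v,w)\operatorname{Re}(\omega)^\sharp$ is designed to adjust.

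From this, $\nabla$ is torsion-free, and $C=-\nabla g$ is totally symmetric with traceless part $C_0=c\,\operatorname{Re}(Q)$ and trace $\tau=16\operatorname{Re}(\omega)$. The coefficient $16$ matches the moduli-data definition, and this check fixes the constants.

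\textbf{Step 4: the conditions.} Conservativeness is checked directly. $\operatorname{div}_g C_0=0$ because $Q$ is holomorphic, reversing the computation in Lemma \ref{holomorphicity}. Also $\nabla^g\tau$ is symmetric and trace-free because $\operatorname{Re}\omega$ is harmonic, so (\ref{field_equation}) reduces to the identity $\operatorname{div}(\text{trace part})=\tfrac12\nabla^g\tau$ used in that proof.

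Normalization comes from the Tzitz\'eica equation. We have $S_g=-\Delta_g\log h$ up to a factor, and $|C_0|^2$ is proportional to $|Q|^2h^{-3}$. Substituting the scalar equation gives $|C_0|^2=4S_g+16$ once the constants in $\Phi$ and the metric convention are aligned.

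\textbf{Main obstacle.} I expect the hardest part to be the bookkeeping in Step 3. It requires computing $\pi(D_vD_w\xi)$ with the correct real and complex identifications, and confirming that the extra $2g(v,w)\operatorname{Re}(\omega)^\sharp$ term produces exactly the trace $16\operatorname{Re}\omega$ and the constants $4$ and $16$ in the normalization.

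I would do this in a local conformal coordinate $z$ with $g=e^{u}|dz|^2$, $Q=q\,dz^3$, $\omega=a\,dz$. I would expand $D$ explicitly there and read off the Christoffel symbols, using the real form of the affine sphere structure equations as a guide.
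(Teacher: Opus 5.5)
Your route is the paper's: split off $\omega\,\mathrm{Id}$, solve the cyclic problem with a diagonal metric, read $\nabla$ off $D$, then check conservativity and normalization. Two of your steps are better justified than in the paper.
\begin{itemize}
\item For the metric, the paper inserts the ansatz $\operatorname{diag}(h,1,h^{-1})$ and cites existence and uniqueness. Your $\mathbb{Z}_3$-symmetry $\operatorname{diag}(\zeta,1,\zeta^{-1})$, together with compatibility with the pairing $K_X^{-1}\otimes K_X\to\mathcal{O}_X$, actually explains why the Hitchin--Simpson metric has that form.
\item For conservativity, your check (harmonic trace plus $\operatorname{div}_g C_0=0$, i.e.\ Lemma~\ref{holomorphicity} run backwards) is cleaner than the paper's explicit divergence computation. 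It also shows this part does not depend on any constants.
\end{itemize}
One slip: invariant line subbundles are not forced to contain $K_X$. Rather, $\Phi_0 s=\lambda s$ gives $s_2=\lambda s_1$ and $s_3=\lambda s_2$, so $s_1\neq0$ and $\deg L\le\deg K_X^{-1}<0$; rank two follows by duality.

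The gap is your Step 3 together with the normalization, which is where the lemma has content, and your heuristics there are wrong. The $\mathcal{O}_X$-component of $D_w\xi$ is $\Omega(w)$ with $\Omega=\omega+\bar\omega=2\operatorname{Re}\omega$, not $\operatorname{Re}\omega(w)$. Also, $\Phi^*$ contributes no $g(v,w)$-term to $\pi(D_vD_w\xi)$. The $K_X^{-1}$-slot receives only:
\begin{itemize}
\item $\nabla^g_vw+\Omega(v)w$ from itself;
\item $\Omega(w)v$ through the $(1,2)$-entry of $\Phi$;
\item $h^{-1}\bar Q(v,w)$ through the $(1,3)$-entry $\bar Q h^{-2}$ of $\Phi^*$. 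This is the real vector $B_Q(v,w)$ with $g(B_Q(v,w),u)=\operatorname{Re}Q(u,v,w)$.
\end{itemize}
So the added term $g(v,w)\Omega^\sharp$ corrects nothing. It is the only source of that piece, and it is exactly what makes $\nabla g$ totally symmetric: without it, the $\Omega$-part of $-\nabla g$ would be $2\Omega(u)g(v,w)+\Omega(v)g(u,w)+\Omega(w)g(u,v)$.

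Nor are the constants yours to ``fix'' or ``align'': $\Phi$ and the correction term are prescribed. You must verify
\[
C(u,v,w)=2\Omega(u)g(v,w)+2\Omega(v)g(u,w)+2\Omega(w)g(u,v)+2\operatorname{Re}Q(u,v,w),
\]
hence $\tau=8\Omega=16\operatorname{Re}\omega$ and $C_0=2\operatorname{Re}Q$, so your $c$ must be $2$. Then, with $g=e^u|dz|^2$ and $Q=q\,dz^3$, the scalar equation reads $u_{z\bar z}=e^u-|q|^2e^{-2u}$. This gives $S_g=-4+4|q|^2e^{-3u}$ and $|C_0|^2=16|q|^2e^{-3u}=4S_g+16$. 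Any other value of $c$ would break the normalization. So this computation, which the paper carries out explicitly for $\nabla$ and $C$, is the proof itself rather than bookkeeping to be adjusted afterwards.
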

\begin{proof}
We first show that the unique harmonic metric solving the Hitchin equation has the diagonal form $H = \operatorname{diag}(h, 1, h^{-1})$.
The Higgs field $\Phi$ can be decomposed as $\Phi = \omega I + A$, where $I$ is the identity endomorphism and 
$$
A = \begin{pmatrix} 0 & 1 & 0 \\ 0 & 0 & 1 \\ Q & 0 & 0 \end{pmatrix}\in H^0(X;\operatorname{End}(\mathcal{E})\otimes K_X)
$$
is a cyclic Higgs field.
The Higgs field $\Phi$ is a matrix-valued $1$-form, and the commutator in the Hitchin equation is the graded Lie bracket $[\Phi, \Phi^*] = \Phi \wedge \Phi^* + \Phi^* \wedge \Phi$.
Because $\omega$ and $\bar{\omega}$ are $1$-forms, their wedge products anti-commute, yielding $\omega \wedge \bar{\omega} + \bar{\omega} \wedge \omega = 0$.
Similarly, the cross terms $\omega I \wedge A^* + A^* \wedge \omega I$ vanish.
Therefore, the commutator $[\Phi, \Phi^*]$ reduces to the nilpotent part as
$$
[\Phi, \Phi^*] = [A, A^*].
$$
Assuming the diagonal ansatz $H = \operatorname{diag}(h, 1, h^{-1})$, the metric adjoint $A^* = H^{-1}A^\dagger H$ evaluates to
$$
A^* = \begin{pmatrix} h^{-1} & 0 & 0 \\ 0 & 1 & 0 \\ 0 & 0 & h \end{pmatrix} \begin{pmatrix} 0 & 0 & \bar{Q} \\ 1 & 0 & 0 \\ 0 & 1 & 0 \end{pmatrix} \begin{pmatrix} h & 0 & 0 \\ 0 & 1 & 0 \\ 0 & 0 & h^{-1} \end{pmatrix} = \begin{pmatrix} 0 & 0 & \bar{Q}h^{-2} \\ h & 0 & 0 \\ 0 & h & 0 \end{pmatrix}.
$$
Matrix multiplication yields the diagonal commutator:
$$
[A, A^*] = \operatorname{diag}(h - |Q|^2h^{-2}, \, 0, \, -h +  |Q|^2h^{-2}).
$$
The curvature of the Chern connection for $H$ is $F_H = \operatorname{diag}(F_h, 0, -F_h)$, where $F_h$ is the curvature of $(X,h)$.
Substituting this into the Hitchin equation $F_H + [\Phi, \Phi^*] = 0$ yields a trivial identity for the central $\mathcal{O}_X$ component, and reduces the remaining system to the scalar Tzitz\'eica equation 
\begin{equation}\label{Scalar_Hitchin}
F_h + h - |Q|^2h^{-2} = 0.
\end{equation}
By the existence and uniqueness theorem applied to the cyclic Higgs bundle $(\mathcal{E},A)$, there is a unique Hermitian metric solving equation~(\ref{Scalar_Hitchin}), confirming that $H$ is diagonal and restricts to the Hermitian metric $h$ on $T_X$.
We now study the properties of the connection $\nabla$. Since $H$ is diagonal with middle entry equal to $1$, the Chern connection satisfies $\nabla^H \xi = 0$ for $\xi=(0,1,0)^T$.
The action of the Higgs field on $\xi$ yields that
$$
\Phi(v) \xi = (\omega(v)I + A(v))\xi = \omega(v)\xi + v,
$$
and that
$$
\Phi^*(v) \xi = (\bar{\omega}(v)I + A^*(v))\xi = \bar{\omega}(v)\xi + v^\flat,
$$
where $v$ is a smooth vector field on $X$ and $\flat$ is the musical isomorphism induced by $h$.
Denoting $\Omega := \omega + \bar{\omega}$, the flat connection evaluates to $D_v \xi = (v , \Omega(v) , v^\flat)$.
Therefore, for smooth vector fields $v,w$ on $X$, we have
$D_v D_w \xi = D_v w +D_v (w^\flat)+ D_v(\Omega(w)\xi)$.
Denote by $\nabla^h$ the Chern connection of $(T_X,h)$. Computation yields $\pi(D_v w)=\nabla^h_v w + \Omega(v)w$ and $\pi(D_v(\Omega(w)\xi))=\Omega(w)v$.
Moreover, the projection image of $D_v (w^\flat)$ arises solely from $A^*(v) w^\flat$, which equals $\pi(D_v (w^\flat))=h^{-1}\bar{Q}(v,w)$.
We therefore arrive at
$$\nabla_v w = \pi(D_v D_w \xi) + g(v,w)\Omega^\sharp = \nabla^h_v w + \Omega(v)w + \Omega(w)v + g(v,w)\Omega^\sharp + h^{-1}\bar{Q}(v,w).$$
To verify that $\nabla$ is torsion-free, we examine the torsion tensor $T(v,w) = \nabla_v w - \nabla_w v - [v,w]$ of $\nabla$.
Since the Chern connection $\nabla^h$ is torsion-free, we have $\nabla^h_v w - \nabla^h_w v = [v,w]$.
The terms $\Omega(v)w + \Omega(w)v$ are symmetric in $v$ and $w$.
Furthermore, since $Q \in H^0(X; 3K_X)$ is a totally symmetric cubic form, the bilinear form $h^{-1}\bar{Q}$ is symmetric.
Thus, $T(v,w) = 0$.

Next, we establish that $(-\nabla g, g)$ is a conservative statistical structure for $g:=\operatorname{Re}(h)$.
By the metric compatibility of the Chern connection, the covariant derivative of the Riemannian metric $g$ reduces to contractions with the difference tensor $B(v,w) := \nabla_v w - \nabla^h_v w$.
Evaluating the definition of $\nabla$ yields $$B(v,w) = \Omega(v)w + \Omega(w)v + g(v,w)\Omega^\sharp + h^{-1}\bar{Q}(v,w).$$
Substitution into $(\nabla_u g)(v,w) = -g(B(u,v), w) - g(v, B(u,w))$ yields the equation 
\begin{equation}\label{collapse}
(\nabla_u g)(v,w) = -2\Omega(u)g(v,w) - 2\Omega(v)g(u,w) - 2\Omega(w)g(u,v) - 2\operatorname{Re}(\bar{Q}(u,v,w)).
\end{equation}
Since this expression~(\ref{collapse}) is invariant under permutations of $u$, $v$, and $w$, the third-order tensor $\nabla g$ is totally symmetric.
We now define the Amari--Chentsov tensor by evaluating $C:=-\nabla g$, yielding
$$
C(u,v,w) = -(\nabla_u g)(v,w) = 2\Omega(u)g(v,w) + 2\Omega(v)g(u,w) + 2\Omega(w)g(u,v) + 2\operatorname{Re}(\bar{Q}(u,v,w)).
$$
This confirms that the third-order tensor $C$ is totally symmetric.
The traceless part of $C$ is thus $C_0 = 2\operatorname{Re}(\bar{Q}) = Q + \overline{Q}$.
Therefore, the $(3,0)$-component of $C$ is $Q$, confirming the moduli data.
We verify that the tensor $C$ satisfies the field equation~(\ref{field_equation}).
Taking the trace of $C$ yields the Chebyshev $1$-form $\tau := \operatorname{tr}_{g} (C)$.
Noting that the trace of $\operatorname{Re}(\bar{Q})$ vanishes, we evaluate the trace to obtain $\tau = 8\Omega = 16\operatorname{Re}(\omega)$, confirming the definition of $\omega$.
Concurrently, we evaluate the divergence $\operatorname{div}_{g} (C)$ with respect to the Levi-Civita connection $\nabla^g$.
Because $X$ is a K\"ahler manifold, the Chern connection $\nabla^h$ coincides with the Levi-Civita connection $\nabla^g$ on the tangent bundle.
Since $\omega \in H^0(X; K_X)$ is holomorphic, the $1$-form $\Omega = 2\operatorname{Re}(\omega)$ is harmonic, ensuring it is divergence-free.
Similarly, since $Q \in H^0(X; 3K_X)$ is holomorphic, its associated totally symmetric, traceless tensor $\operatorname{Re}(\bar{Q})$ is incompressible.
Consequently, taking the divergence of $C$ simplifies to $$\operatorname{div}_{g} (C)(v,w) = 2(\nabla^{g}_w \Omega)(v) + 2(\nabla^{g}_v \Omega)(w).$$ Because $\Omega$ is a closed 1-form, its covariant derivative is symmetric, reducing the expression to $\operatorname{div}_{g} (C) = 4\nabla^{g} \Omega$.
We arrive at the identity $$\nabla^{g}(\operatorname{tr}_{g} (C))  = 8\nabla^{g} \Omega = 2\operatorname{div}_{g}(C).$$ 
Consequently, the pair $(C, g)$ is a conservative statistical structure on $X$.

Finally, the metric $h$ satisfies the scalar Hitchin equation $F_h + h - |Q|^2h^{-2} = 0$, which evaluates to the Riemannian curvature condition $|C_0|^2 = 4S_g + 16$, ensuring the structure is normalized.
\end{proof}

\begin{remark}
The significance of Lemma \ref{core_lemma1} lies in its explicit geometric construction. By utilizing the graded structure of the cyclic Higgs bundle, the flat connection $D = \nabla^H + \Phi + \Phi^*$ flawlessly internalizes the totally symmetric Amari-Chentsov tensor into its projection mapping, effectively binding the metric curvature and the statistical distortion into a unified gauge-theoretic object.
\end{remark}

We establish the converse statement, demonstrating that any existing normalized conservative statistical structure can be identically recovered from its moduli data via the Higgs bundle construction.

\begin{lemma}\label{core_lemma2}
Let $M$ be a closed orientable surface of genus at least $2$, and $(C,g)$ a normalized conservative statistical structure on $M$.
Let $J$ be the complex structure induced by the conformal class of $g$, so that $X=(M,J)$ is a compact Riemann surface.
Then, the moduli data $(\omega,Q)$ of $(C,g)$ defines a Higgs field
$$
\Phi:=\begin{pmatrix}
\omega & 1 & 0\\
0 & \omega & 1\\
Q & 0 & \omega
\end{pmatrix}\in H^0(X;\operatorname{End}(\mathcal{E})\otimes K_X)
$$ on $\mathcal{E}:=K^{-1}_X\oplus \mathcal{O}_X\oplus K_X$, such that $g=\operatorname{Re}(h)$ and $C=-\nabla g$, where
\begin{enumerate}
    \item [1.] $h$ is the unique Hermitian metric on $X$ such that $H=\operatorname{diag}(h,1,h^{-1})$ solves the Hitchin equation $F_H+[\Phi,\Phi^*]=0$ on the stable Higgs bundle $(\mathcal{E},\Phi)$,
    \item [2.] $F_H$ is the curvature of the Chern connection $\nabla^H$ of $(\mathcal{E},H)$,
\end{enumerate}
and $\nabla$ is the unique connection on $TM$ satisfying $\nabla_vw=\pi(D_vD_w\xi)+2g(v,w)\operatorname{Re}(\omega)^\sharp$ for all smooth vector fields $v,w$ on $M$, where 
\begin{enumerate}
 \item [1.] $\pi\colon\mathcal{E}\rightarrow K_X^{-1}$ is the canonical projection, and $D:=\nabla^H+\Phi+\Phi^*$,
    \item [2.] $\xi:=(0,1,0)\in H^0(X;\mathcal{E})$, and $\sharp$ is the musical isomorphism induced by $g$.
\end{enumerate}
\end{lemma}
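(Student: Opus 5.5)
The plan is to reduce the converse to the uniqueness statement in Lemma~\ref{core_lemma1}. First, by Lemmas~\ref{harmonicity} and~\ref{holomorphicity}, the Chebyshev form $\tau=\operatorname{tr}_g(C)$ is harmonic and the $(3,0)$-part $Q$ of $C$ is holomorphic. On the compact surface $X$, a real harmonic $1$-form is the real part of a unique holomorphic $1$-form, so we can write $\tau=16\operatorname{Re}(\omega)$ with $\omega\in H^0(X;K_X)$. This shows that the moduli data $(\omega,Q)$ is well defined, and hence so is $\Phi$. Using the decomposition~(\ref{decomposition_2}) together with $C_0=Q+\overline{Q}$, we write $C$ explicitly in terms of $g$, $\Omega:=2\operatorname{Re}(\omega)$ and $Q$:
\[
C(u,v,w)=2\Omega(u)g(v,w)+2\Omega(v)g(u,w)+2\Omega(w)g(u,v)+2\operatorname{Re}(\bar Q(u,v,w)).
\]
This has exactly the form obtained for $-\nabla g$ in the proof of Lemma~\ref{core_lemma1}, with trace $8\Omega=\tau$.

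Next we identify the metric. Write $g=\operatorname{Re}(h_0)$, where $h_0$ is the Hermitian metric on $T_X$ determined by $g$. The normalization $|C_0|^2=4S_g+16$ should be rewritten in complex terms. We have $|C_0|^2$ proportional to $|Q|^2_{h_0}=|Q|^2h_0^{-3}$, and $S_g$ is proportional to the Gaussian curvature, which in turn equals $-F_{h_0}/h_0$ up to the normalization convention. Using the same dictionary that the last step of Lemma~\ref{core_lemma1} uses in the opposite direction, the normalization becomes
\[
F_{h_0}+h_0-|Q|^2h_0^{-2}=0,
\]
which is the scalar Tzitz\'eica equation~(\ref{Scalar_Hitchin}) for the data $Q$. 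Since $[\Phi,\Phi^*]=[A,A^*]$, this is equivalent to $H_0=\operatorname{diag}(h_0,1,h_0^{-1})$ solving the Hitchin equation for $(\mathcal{E},\Phi)$. The uniqueness part of Lemma~\ref{core_lemma1} then forces $h_0=h$, which gives $g=\operatorname{Re}(h)$.

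Finally, we apply Lemma~\ref{core_lemma1} to this $h$. The connection $\nabla$ defined there is torsion-free, and it satisfies $-\nabla g=2\Omega\odot g$ terms plus $2\operatorname{Re}(\bar Q)$, which is the formula above. Hence $C=-\nabla g$. The connection $\nabla$ is uniquely specified by the displayed formula, since the right-hand side is determined by $D$, $\xi$, $g$ and $\omega$. This proves the lemma.

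The main obstacle is the constant bookkeeping in the second step. We must check that the normalization $|C_0|^2=4S_g+16$ translates exactly into $F_h+h-|Q|^2h^{-2}=0$, with the paper's conventions for $|Q|^2$, for $F_h$ as a $(1,1)$-form against the area form, and for $C_0=Q+\bar Q$; in particular, the factor between $|Q+\bar Q|_g^2$ and $|Q|^2h^{-3}$ has to be tracked. I would fix the conventions by the requirement that this is precisely the inverse of the final computation in Lemma~\ref{core_lemma1}, so that both lemmas use one identity read in opposite directions. A secondary point is the harmonic-to-holomorphic step for $\tau$. It relies on compactness only through Hodge theory: locally, $\tau=d f$ with $f$ harmonic, so $\tau=\operatorname{Re}(2\partial f)$, and $\partial f$ is holomorphic and globally defined.
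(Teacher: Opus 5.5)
Your proposal is correct and follows essentially the paper's route. You obtain the moduli data from Lemmas~\ref{harmonicity} and~\ref{holomorphicity}, rewrite the normalization as the Tzitz\'eica equation, use uniqueness of its solution to get $g=\operatorname{Re}(h)$, and finish with Lemma~\ref{core_lemma1} plus the fact that the trace and $(3,0)$-part determine $C$. The paper also only asserts the Tzitz\'eica step; with $g=h(dx^2+dy^2)$ and $Q=q\,dz^3$ one finds $|C_0|^2=16|q|^2h^{-3}$ and $K=-2+2|q|^2h^{-3}$, so the constants do match.

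The one difference is the uniqueness of the connection. The paper additionally proves that $\hat\nabla$ is the unique torsion-free connection with $-\nabla g=C$, by the usual cyclic-permutation argument. You instead treat uniqueness as definitional, which suffices for the statement as literally written.
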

\begin{proof}
Let $(C,g)$ be a normalized conservative statistical structure on $X$.
By Lemma \ref{harmonicity}, the trace form $\tau := \operatorname{tr}_g(C)$ is harmonic.
Since $X$ is a K\"ahler manifold, any real harmonic $1$-form on $X$ decomposes into the sum of a holomorphic and an anti-holomorphic $1$-form.
Therefore, there exists a unique $\omega \in H^0(X; K_X)$ such that $\tau = 16\operatorname{Re}(\omega)$.
By Lemma \ref{holomorphicity}, the $(3,0)$-component $Q$ of $C$ is a holomorphic cubic differential, establishing that $Q \in H^0(X; 3K_X)$.
This establishes the moduli data $(\omega, Q)$.

By hypothesis, the statistical structure is normalized, satisfying the condition $$|C_0|^2 = 4S_g + 16.$$
Let $h$ be the Hermitian metric on $T_X$ such that $g = \operatorname{Re}(h)$.
Rewriting the normalized condition in terms of the Hermitian metric $h$ and the holomorphic differential 
$Q$ yields the scalar Tzitz\'eica equation
\begin{equation}\label{Scalar_Hitchin_2}
F_h + h - |Q|^2h^{-2} = 0,
\end{equation}
where $F_h$ is the curvature of the Chern connection of $(X,h)$.
By the uniformization theorem for Higgs bundles over a compact Riemann surface of genus at least $2$, there exists a unique Hermitian metric $h$ on $T_X$ solving equation~(\ref{Scalar_Hitchin_2}) for the given $Q$.
This unique $h$ defines the diagonal harmonic metric $H = \operatorname{diag}(h,1,h^{-1})$ on $\mathcal{E}$ satisfying $F_H + [\Phi, \Phi^*] = 0$.
Let $\hat{\nabla}$ be the connection on $TX$ defined by $\hat{\nabla}_v w = \pi(D_v D_w \xi) + 2g(v,w)\operatorname{Re}(\omega)^\sharp$.
By Lemma~\ref{core_lemma1}, $\hat{\nabla}$ is a torsion-free connection and $-\hat{\nabla} g = \hat{C}$, where $(\hat{C}, g)$ is a conservative statistical structure with moduli data $(\omega, Q)$.
Because a real, totally symmetric $(0,3)$-tensor on a Riemann surface is uniquely determined by its trace and $(3,0)$-component, and both $C$ and $\hat{C}$ share the trace $16\operatorname{Re}(\omega)$ and $(3,0)$-component $Q$, we obtain $C = \hat{C}$.
Thus, $C = -\hat{\nabla} g$.

To see that $\hat{\nabla}$ is the unique such connection, suppose $\nabla$ is any torsion-free connection satisfying $-\nabla g = C$.
Then, the difference $\beta(u,v) := \nabla_u v - \hat{\nabla}_u v$ is a symmetric tensor.
The condition $\nabla g = \hat{\nabla} g = -C$ implies that the covariant derivatives of the metric coincide, yielding
$$g(\beta(u,v), w) + g(v, \beta(u,w)) = 0$$
for all smooth vector fields $u, v, w$ on $M$.
Permuting the indices cyclically and utilizing the symmetry of $\beta$ gives $g(\beta(u,v), w) = 0$, which implies $\beta = 0$.
Therefore, we conclude that $\nabla = \hat{\nabla}$ is unique.
\end{proof}

With the bijection rigorously established, we arrive at our main global result. We demonstrate that the collection of all normalized conservative statistical structures is elegantly parameterized by the Teichm\"uller space of the surface.

\begin{theorem}\label{correspondence}
Let $M$ be a closed orientable surface of genus $p\geq2$, and let $\mathcal{S}_p$ be the collection of normalized conservative statistical structures on $M$.
Let $\mathcal{T}_p$ be the Teichm\"uller space of genus $p$, and $$\tilde{\mathcal{T}}_p:=\bigcup_{X\in\mathcal{T}_p}H^0(X;K_X)\oplus H^0(X;3K_X).$$
Then, the mapping $\mu\colon\mathcal{S}_p\rightarrow\tilde{\mathcal{T}}_p$ that sends each conservative statistical structure in $\mathcal{S}_p$ to its moduli data is surjective. Moreover, two normalized conservative statistical structures $(C,g)$ and $(C',g')$ share the same moduli data if and only if there exists a diffeomorphism $\phi\in\operatorname{Diff}(M)$ isotopic to the identity such that $(C',g')=(\phi^*C,\phi^*g)$.
\end{theorem}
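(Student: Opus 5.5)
Surjectivity follows from Lemma~\ref{core_lemma1}. Take a point of $\tilde{\mathcal{T}}_p$. It is a marked Riemann surface $X\in\mathcal{T}_p$, realized by a complex structure $J$ on $M$ determined up to isotopy, together with a pair $(\omega,Q)\in H^0(X;K_X)\oplus H^0(X;3K_X)$. Lemma~\ref{core_lemma1} produces a unique $h$ solving the Tzitz\'eica equation~(\ref{Scalar_Hitchin}), and with it a normalized conservative statistical structure $(C,g)=(-\nabla g,g)$ on $M$ with $g=\operatorname{Re}(h)$. The proof of that lemma computes $\operatorname{tr}_g(C)=16\operatorname{Re}(\omega)$ and shows that the $(3,0)$-component of $C$ is $Q$. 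The metric $g$ is Hermitian for $J$, so its conformal class induces $J$ and its marking is the given one. Hence $\mu(C,g)=(X,\omega,Q)$.

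For the ``if'' direction of the fibre characterization, I will use naturality. Suppose $(C',g')=(\phi^*C,\phi^*g)$ with $\phi$ isotopic to the identity. Then the conformal class of $g'$ is the pullback of that of $g$, so $J'=\phi^*J$. Since $\phi\simeq\mathrm{id}$, the marked surfaces $(M,J)$ and $(M,J')$ define the same point of $\mathcal{T}_p$. Trace and type decomposition commute with pullback by the biholomorphism $\phi\colon (M,J')\to(M,J)$, so $\operatorname{tr}_{g'}C'=\phi^*\operatorname{tr}_gC$ and $Q'=\phi^*Q$. Under the identification of $H^0$ spaces over the same Teichm\"uller point, which is induced exactly by such isotopic biholomorphisms, the moduli data agree.

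For the ``only if'' direction, suppose $\mu(C,g)=\mu(C',g')$. Equal Teichm\"uller points mean there is $\phi\in\operatorname{Diff}(M)$ isotopic to the identity with $J'=\phi^*J$. Equality of the data then means $\omega'=\phi^*\omega$ and $Q'=\phi^*Q$ as sections over $X'=(M,J')$. By Lemma~\ref{core_lemma2}, $g'=\operatorname{Re}(h')$, where $h'$ is the unique solution of the Tzitz\'eica equation on $X'$ with data $Q'$. By naturality of the Hitchin equation under biholomorphisms, $\phi^*h$ also solves the equation on $X'$ with data $\phi^*Q=Q'$. Uniqueness then gives $h'=\phi^*h$, and hence $g'=\phi^*g$. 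Next, Lemma~\ref{core_lemma2} expresses $C'$ through the flat connection $D'$ built from $(\Phi',H')$. Since $\Phi'=\phi^*\Phi$, $H'=\phi^*H$ and $\xi$ is preserved, we get $D'=\phi^*D$ and therefore $\nabla'=\phi^*\nabla$. This gives $C'=-\nabla'g'=\phi^*C$. More simply, a real totally symmetric $3$-tensor on a surface is determined by its trace and $(3,0)$-part, as noted in the proof of Lemma~\ref{core_lemma2}, and both of these pull back correctly.

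The main obstacle is making ``same moduli data'' precise, since $\tilde{\mathcal{T}}_p$ is a union of fibres over points of $\mathcal{T}_p$. The fibres at different complex structures representing the same Teichm\"uller point must be identified via pullback along the isotopic biholomorphism $\phi$. The delicate point is that this $\phi$ is unique up to isotopy, because genus $p\ge2$ surfaces have no nontrivial automorphisms isotopic to the identity, so the identification is well defined. I would address this first: fix the vector bundle structure on $\tilde{\mathcal{T}}_p$ by declaring this pullback identification, and only then run the uniqueness argument above.
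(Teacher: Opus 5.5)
Your proposal is correct and follows the paper's proof. Surjectivity comes from Lemma~\ref{core_lemma1}; for the fibres, equality of Teichm\"uller points gives an isotopically trivial $\phi$ with $J'=\phi^*J$, uniqueness of the Tzitz\'eica solution gives $g'=\phi^*g$, and determination of a real totally symmetric $3$-tensor by its trace and $(3,0)$-part gives $C'=\phi^*C$. Beyond the paper, you also prove the ``if'' direction and make explicit the identification of fibres over a Teichm\"uller point, which the paper uses tacitly. Note that this identification is well defined because the biholomorphism isotopic to the identity is actually unique, not merely unique up to isotopy; your fact about automorphisms gives exactly this stronger uniqueness.
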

\begin{proof}
We first establish the surjectivity of the mapping $\mu$.
Let $(\omega, Q) \in \tilde{\mathcal{T}}_p$.
By Lemma \ref{core_lemma1}, there exists a unique Hermitian metric $h$ on $X$ solving the corresponding Hitchin equation, which induces a normalized conservative statistical structure $(C, g)$ on $M$ via $g= \operatorname{Re}(h)$.
The moduli data of this structure is precisely $(\omega, Q)$.
Consequently, the pair $(C, g)$ is a normalized conservative statistical structure, so $(C, g)$ $\in \mathcal{S}_p$.
Thus, we have that $\mu(C, g) = (\omega, Q)$, proving that $\mu$ is surjective.

Next, we establish the equivalence of structures sharing the same moduli data.
Let $(C, g)$ and $(C', g')$ be two normalized conservative statistical structures in $\mathcal{S}_p$ that map to the same moduli data $(\omega, Q) \in \tilde{\mathcal{T}}_p$.
Because the Teichm\"uller space $\mathcal{T}_p$ parametrizes complex structures modulo diffeomorphisms isotopic to the identity, the conformal classes of $g$ and $g'$ are related by a diffeomorphism $\phi \in \operatorname{Diff}(M)$ isotopic to the identity, namely $\phi^*[g]=[g']$.
Pulling back $(C, g)$ via $\phi$ yields a new normalized conservative statistical structure $(\phi^*C, \phi^*g)$ on $M$.
By Lemma \ref{core_lemma2}, on a fixed Riemann surface $X$, the metric $\phi^*g$ is uniquely determined as $\phi^*g = \operatorname{Re}(h)$, where $h$ is the unique Hermitian metric solving the scalar Hitchin equation for the fixed holomorphic differential $Q$.
Since the Hitchin equation in Lemma \ref{core_lemma2} depends only on the complex structure and $Q$, both $g'$ and $\phi^*g$ must equal $\operatorname{Re}(h)$, yielding $g' \equiv \phi^*g$.
Furthermore, a totally symmetric $(0,3)$-tensor on a Riemann surface is uniquely determined by its trace and $(3,0)$-component.
Since $C'$ and $\phi^*C$ both have trace $16\operatorname{Re}(\omega)$ and $(3,0)$-component $Q$ with respect to the identical metric $g'=\phi^*g$, we obtain $C' = \phi^*C$.
Therefore, we finally arrive at $(C', g') = (\phi^*C, \phi^*g)$, proving the desired condition.
\end{proof}

\section{Conclusions and Remarks}

Theorem~\ref{correspondence} illustrates a profound rigidity phenomenon in statistical geometry. The constraint equations that define a conservative statistical structure locally force the traceless components of the Amari--Chentsov tensor into the rigid algebraic framework of holomorphic cubic differentials. By utilizing the global uniformization machinery of Higgs bundles, we explicitly demonstrated that the global space of such statistical metrics fundamentally mirrors the structure of a specific vector bundle over Teichm\"uller space. This provides a rich, gauge-theoretic geometric anchor for mathematical statistics on surfaces, extending classical results on affine spheres \cite{Loftin2001} into the broader context of information geometry.

\section*{Acknowledgement}

The author is deeply grateful to Weiyi Zhang for very useful and inspiring discussions.
The author expresses gratitude to the reviewers for various suggestions.
This research was completed while the author was studying at the Mathematics Institute of the University of Warwick.
The author therefore would like to thank the University of Warwick for its hospitality.
The author is indebted to the Russian mathematical society, especially, to all the professors teaching the math in Moscow program, and most importantly, to the author's supervisor Alexander Petrovich Veselov at Loughborough University, as they cultivated the author's mathematical literacy and maturity.
\section*{Funding Information}

No funding was received to assist with the preparation of this manuscript, and the author did not receive support from any organization for the submitted work.
\section*{Statements and Declarations}

The author certifies that the author has no affiliations with or involvement in any other organization or entity with any financial interest or non-financial interest in the subject matter or materials discussed in this manuscript.
This article is licensed under a Creative Commons Attribution 4.0 International License, which permits use, sharing, adaptation, distribution and reproduction in any medium or format, as long as you give appropriate credit to the original author and the source, provide a link to the Creative Commons licence, and indicate if changes were made.
The images or other third party material in this article are included in the article’s Creative Commons licence, unless indicated otherwise in a credit line to the material.
If material is not included in the article’s Creative Commons licence and your intended use is not permitted by statutory regulation or exceeds the permitted use, you will need to obtain permission directly from the copyright holder.
Data sharing is not applicable to this article as no datasets were generated or analysed during the current study.
The author hereby provides consent for the publication of the manuscript detailed above.

\bibliographystyle{plain}
\bibliography{references}

@article{Amari1985,
  author  = "Amari, Shun-Ichi",
  title   = "Differential-Geometrical Methods in Statistics",
  journal = "Lecture Notes in Statistics",
  volume  = "28",
  publisher = "Springer-Verlag",
  year    = "1985"
}

@article{Nomizu1994,
  author  = "Nomizu, Katsumi and Sasaki, Takeshi",
  title   = "Affine Differential Geometry",
  journal = "Cambridge Tracts in Mathematics",
  volume  = "111",
  publisher = "Cambridge University Press",
  year    = "1994"
}

@article{Labourie2007,
  author  = "Labourie, Fran\c{c}ois",
  title   = "Flat Projective Structures on Surfaces and Cubic Holomorphic Differentials",
  journal = "Pure and Applied Mathematics Quarterly",
  volume  = "3",
  number  = "4",
  pages   = "1057--1099",
  year    = "2007"
}

@article{Opozda2004,
  author  = "Opozda, Barbara",
  title   = "A class of projectively flat surfaces",
  journal = "Mathematische Zeitschrift",
  volume  = "246",
  pages   = "315--332",
  year    = "2004"
}

@article{Hitchin1987,
  author  = "Hitchin, Nigel J.",
  title   = "The self-duality equations on a Riemann surface",
  journal = "Proceedings of the London Mathematical Society",
  volume  = "s3-55",
  number  = "1",
  pages   = "59--126",
  year    = "1987"
}

@article{Hitchin1992,
  author  = "Hitchin, Nigel J.",
  title   = "Lie groups and Teichmüller space",
  journal = "Topology",
  volume  = "31",
  number  = "3",
  pages   = "449--473",
  year    = "1992"
}

@article{Simpson1988,
  author  = "Simpson, Carlos T.",
  title   = "Constructing variations of Hodge structure using Yang-Mills theory and applications to uniformization",
  journal = "Journal of the American Mathematical Society",
  volume  = "1",
  number  = "4",
  pages   = "867--918",
  year    = "1988"
}

@article{Loftin2001,
  author  = "Loftin, John C.",
  title   = "Affine spheres and convex $\mathbb{R}{P}^n$-manifolds",
  journal = "American Journal of Mathematics",
  volume  = "123",
  number  = "2",
  pages   = "255--274",
  year    = "2001"
}

\end{document}